\documentclass[onecolumn,secnumarabic,amssymb, aps, pra,notitlepage,showpacs]{revtex4-1}
\setlength{\textheight}{9.5in}

\usepackage{amsmath,amsfonts,amssymb,amsthm}
\usepackage[colorlinks]{hyperref} 
\usepackage{dsfont}
\usepackage{comment}
\usepackage{bbm}
\usepackage{MnSymbol}
\usepackage{mathdots}
\usepackage[toc,page]{appendix}
\usepackage{float}
\usepackage{xspace}
\usepackage[caption=false]{subfig}
\usepackage{graphicx}
\usepackage{color}
\usepackage{clrscode}
\usepackage{epsfig}                  
\usepackage{enumerate}

\newtheorem{theorem}{Theorem}
\newtheorem{lemma}{Lemma}[section]

\newtheorem{corollary}{Corollary}

\def\be{\begin{equation}}
\def\ee{\end{equation}}
\def\ba{\begin{eqnarray}}
\def\ea{\end{eqnarray}}

\def\la{\langle}
\def\ra{\rangle}

\def\la{\langle}
\def\ra{\rangle}

\DeclareMathOperator{\Tr}{Tr}

\DeclareMathOperator*{\sth}{subject\ to}
\newcommand{\ket}[1]{|{#1}\rangle}
\newcommand{\kett}[1]{|{#1}\rrangle}
\newcommand{\bra}[1]{\langle{#1}|}
\newcommand{\braa}[1]{\llangle{#1}|}

\newcommand{\brakett}[2]{\llangle{#1}|{#2}\rrangle}

\newcommand{\pr}[1]{\otimes{#1}}

\begin{document}

\title{No-iteration of unknown quantum gates}
\author{Mehdi Soleimanifar}%
\email{mehdi.soleimanifar@gmail.com}
\author{Vahid Karimipour}%
\email{vahid@sharif.edu}
\affiliation{Department of Physics, Sharif University of Technology, Tehran, Iran}

\begin{abstract}
We propose a new no-go theorem by proving the impossibility of constructing a deterministic quantum circuit that iterates a unitary oracle by calling it only once. Different schemes are provided to bypass this result and to approximately realize the iteration. The optimal scheme is also studied. An interesting observation is that for large number of iterations, a trivial strategy like using the identity channel has the optimal performance, and preprocessing, postprocessing, or using resources like entanglement does not help at all. Intriguingly, the number of iterations, when being large enough, does not affect the performance of the proposed schemes. 
\end{abstract}
\pacs{03.67.-a, 03.67.Lx, 03.67.Ac}
\maketitle
\section{Introduction}
\label{sec:intro}
\indent
No-go theorems play a major role in quantum information science. The impossibility of perfect cloning of an unknown pure state, the \emph{no-cloning theorem}, is one of the striking features of quantum mechanics \cite{Wootters}. This no-go result is fundamental to key distribution \cite{Scarani}, quantum secret sharing \cite{Hillery}, and quantum error correction \cite{Bennett}. A similar no-go theorem is valid for cloning of an \emph{unknown quantum gate} from one to two copies \cite{unitary-cloning}; that is to say, given a set of distinct states $\bigotimes_{i=1}^m \ket{\psi_i}$ and an unknown unitary channel $\mathcal{U}$, it is impossible to prepare $\bigotimes_{i=1}^mU\ket{\psi_i}$ by a quantum circuit that uses $\mathcal{U}$ only once. This result has implications in cryptographic protocols where the secret is encoded in unitary transformations instead of quantum states, e.g., an alternative version of BB84 protocol where Alice uses two orthogonal bases of unitary transformations instead of states \cite{unitary-cloning}. \\

The no-cloning of states is about the impossibility of realizing a specific transformation of \emph{states}, while the no-cloning of gates is about a transformation of \emph{unitary channels}. Other examples of no-go theorems on transformations of quantum channels are: The impossibility of realizing the \emph{switch} circuit defined by $\mathcal{Z}(\mathcal{V},\mathcal{W})=\ket{0}\bra{0}\pr{\mathcal{VW}}+\ket{1}\bra{1}\pr{\mathcal{WV}}$, in which a pair of input unitary blackboxes $\mathcal{V}$ and $\mathcal{W}$ are connected in two different orders conditioned on the value of an input bit \cite{causal-order}. By generalizing the conventional quantum circuit model to bypass this no-go result, a computational advantage can be obtained \cite{computational-advantage}. Another example is the no-go theorem on controlling a unitary gate given as a blackbox discussed in \cite{Brukner,Friis,Bisio}, or failure of programming a quantum gate array $\mathcal{G}$ that deterministically implements the unitary operation $\mathcal{U}$ determined by the quantum program $\ket{P_U}$ or strictly speaking $\mathcal{G}(\ket{\psi}\pr{\ket{P_U}})=U\ket{\psi}\pr{\ket{P'_U}}$ \cite{no-programming}. \\

In this paper, we introduce and investigate a new no-go theorem on \emph{iterations of an unknown quantum gate}. The iteration of a unitary gate is widely used in quantum algorithms. Quantum search algorithms like Grover algorithm \cite{Grover} or quantum random walk search algorithm \cite{random-walk-search} are based on the repetition of a unitary oracle. They use iterations of the oracle to amplify the amplitude of a desired state in a superposition of states \cite{Amplification}. Quantum phase estimation \cite{phase-estimation} is another algorithm in which successive iterations of a unitary gate are used to generate states appropriate for  an inverse quantum Fourier transform. These algorithms are bases of other quantum computations like order finding \cite{phase-estimation}, integer factorization and discrete logarithms \cite{Shor} or the collision problem \cite{collision}.\\

The question we try to answer is whether it is possible to avoid iterations of a unitary oracle using a deterministic quantum circuit. One possible scenario for doing this is that an apparatus called \emph{gate iterator} of the $n$'th order ($n\in \mathbb{N}/\{1\}$), denoted by $\mathsf{Iter_n}$, takes a unitary oracle $\mathcal{U}$, an arbitrary state $\ket{\psi}$ and the state of the rest of the world $\ket{0}$ as inputs, and by calling $\mathcal{U}$ once, gives $U^n\ket{\psi}$ as the output. The state $\ket{0}$ may also change to another state $\ket{0^{\prime}}$ at the end, see Fig. \ref{fig:ckt-1}. In a more general scenario, the input system could be mixed and the output state be entangled with the ancillary system.\\

\begin{figure}
\centering 
\includegraphics[width=.3\textwidth]{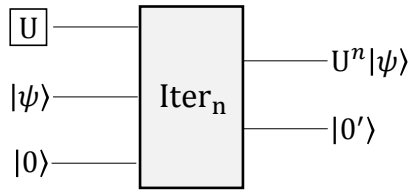}
\caption{In a possible scenario, the gate iterator apparatus, $\mathsf{Iter_n}$, takes $\mathcal{U}$, $\ket{\psi}$ and $\ket{0}$ as inputs and gives $U^n\ket{\psi}$ as the output.}
\label{fig:ckt-1} 
\end{figure}

We prove that it is impossible to realize $\mathsf{Iter}_n$ which consists of a deterministic quantum circuit. We first consider the most general circuit for doing the iteration consists of a \emph{preprocessing} and a \emph{postprocessing} channel. We then show that such a procedure contradicts the linearity of quantum mechanics. \\

Although it is not possible to construct $\mathsf{Iter_n}$ perfectly, it is natural to ask for strategies that realize it in an approximate way. We propose such schemes and then, by using the notion of \emph{fidelity} as a figure of merit, we investigate their performance and how it scales with the number of iterations $n$, and the dimensionality of the state $d$. We also address the problem of finding the optimal iterator. We show that the optimal fidelity is the answer of a semidefinite programming. We solve this problem numerically for $d=2, 3$, see Fig. \ref{fig:data_plot_3}.\\

As we will see, the approximate realization of a gate iterator has interesting characteristics. We show that in all strategies, including the optimal, by increasing the dimension of the system $d$, the fidelity decreases. Intriguingly, the fidelity reaches a constant value by increasing $n$, so when we are allowed to query a unitary oracle only once, there is no difference in quality of high-order imperfect iterations of that. Another interesting observation is that when $n>d$, a trivial strategy like approximating $\mathcal{U}^n$ by $\mathcal{U}$ or even by the identity channel has the optimal performance, and preprocessing, postprocessing, or using resources like entanglement does not help at all. These results are depicted in Fig. \ref{fig:plot1}, \ref{fig:plot_data_2} and \ref{fig:data_plot_3}. \\
 
An anticipated result of our no-go theorem is that when the oracle is completely unknown, the only way to perform iterations of that, seems to be calling the oracle each time and give the state to that and do this repeatedly. For a large number of iterations, this makes the algorithm inefficient. In fact, one way of comparing the complexity of different algorithms is to count the necessary number of querying within the program \cite{querying}. \\

The rest of the paper is organized as follows: In the next section, notations and some basic definitions are presented, and a convenient figure of merit is introduced to measure how good a quantum circuit approximates a given unitary gate. In Sec. \ref{sec:no-repetition}, we prove the no-iteration theorem for an arbitrary order $n$. The fidelity and details of the \emph{random guess} and \emph{measure-and-prepare} strategies are discussed in Sec. \ref{sec:quantum circuit of a repeater} and \ref{sec:random-guess}. Then, two trivial but important methods are introduced in Sec. \ref{sec:idn strategies}, and the optimum fidelity is obtained numerically in Sec. \ref{sec:Optimal strategies}. Finally, we conclude the paper in Sec. \ref{sec:Conclusion}. 
\section{Notations and conventions}\label{Notations and conventions}
In this section, we gather some well-known facts which we will frequently use in the sequel. We denote the complex $d-$dimensional Hilbert space by ${\cal H}_d$,  and the linear space of operators acting on it by $L({\cal H}_d)$ and the set of density matrices by $D({\cal H}_d)$.  A basis for ${\cal H}_d$ is denoted by $\{|j\ra:  j=1,2,\dots,n\}$ and any linear operator $V$ in $L({\cal H}_d)$ is expanded as $V=\sum_{j,k}V_{jk}|j\ra\la k|$.  
A correspondence between this operator and a vector $\kett{V} \in \mathcal{H}_d\pr{\mathcal{H}_d}$ can be established by defining
\begin{align}
\kett{V}:=\frac{1}{\sqrt{d}} \sum_{j,k} V_{jk}\ket{j}\ket{k},
\end{align}
where $\kett{V}$ is called the \emph{vectorized form} of the operator $V$. 
Therefore, the maximally entangled state $|\phi^+\ra:=\frac{1}{\sqrt{d}}\sum_{j}|j\ra |j\ra$ is the vectorized form of the identity operator, i.e., $|\phi^+\ra=\kett{\mathds{1}}$. The inner product between two operators $A$ and $B$ defined as $\Tr(A^\dagger B)$ can equally be written as the ordinary vector product of their vectorized form, that is $\Tr(A^{\dagger}B)=\brakett{A}{B}$. Finally, we note that a vector $\kett{V}$ can be prepared by performing $V$ on the maximally entangled state $\kett{\mathds{1}}$:
\begin{align}\label{eq:-8}
\kett{V}=V\pr{\mathds{1}}\ \kett{\mathds{1}}.
\end{align}

The Choi operator $R_{\mathcal{T}}$ associated with a quantum channel $\mathcal{T}:D(\mathcal{H}_d)\rightarrow D(\mathcal{K}_d)$ is defined on $\mathcal{K}_d\pr{\mathcal{H}_d}$ by
\begin{align}\label{eq:3}
R_{\mathcal{T}}:=(\mathcal{T}\pr{\mathcal{I}})\ (\kett{\mathds{1}}\braa{\mathds{1}}),
\end{align}
where  $\mathcal{I}$ is the identity channel. Obviously, we have $R_{\mathcal{I}}:=\kett{\mathds{1}}\braa{\mathds{1}}$, that is to say, the Choi operator of the identity channel is the Bell state. \\

A unitary quantum channel (quantum gate) $\mathcal{U}$ is defined as
\begin{align}\label{unitary channel}
\mathcal{U}(\rho):=U\rho U^{\dagger},
\end{align}
that according to Eq. (\ref{eq:-8}), its Choi operator is the pure state $R_{\mathcal{U}}=\kett{U}\braa{U}$. \\

When we want to evaluate the performance of a process $\mathcal{E}$ that approximates a gate $\mathcal{U}$, we need to introduce a figure of merit. The fidelity between two channels $\mathcal{G}$ and $\mathcal{E}$ is defined to be the state fidelity between the Choi operators of these channels \cite{Raginsky}:
\begin{align}
\mathcal{F}(\mathcal{G},\mathcal{E}):= \left(\Tr \left(\sqrt{\sqrt{R_{\mathcal{G}}}R_\mathcal{E}\sqrt{R_{\mathcal{G}}}} \right) \right)^2,
\end{align}
which reduces to the following when one of them is a unitary channel of the form (\ref{unitary channel})
\begin{align}\label{eq:-10}
\mathcal{F}(\mathcal{U}, \mathcal{E})=\braa{U} R_{\mathcal{E}}\kett{U}.
\end{align}

Now, we assume that instead of a single gate, a specific set of gates $S$, consists of a finite or infinite collection of unitary gates, are to be approximated with a process $\mathcal{E}$, and each gate $U\in S$ occurs with probability $P(U)$. The input of $\mathcal{E}$ is a given $U\in S$ and the output is ${\mathcal E}_{U}$. Then, a figure of merit that determines the performance of process $\mathcal{E}$ is given by:
\begin{align} \label{eq:-17}
F(\mathcal{E}):=\int dU P(U) \mathcal{F}(\mathcal{U},\mathcal{E}_{U}).
\end{align}
 Here, $dU$ is an invariant Haar measure, that is $d(UV)=d(VU)=dU,\ \forall V\in \mathbb{U}(d)$. When $S$ is the unitary group $\mathbb{U}(d)$, and gates are chosen uniformly, $P(U)=1,\ \forall U\in \mathbb{U}(d)$.
  
\section{No-iteration of unknown quantum gates}
\label{sec:no-repetition}
We now prove the impossibility of implementing $\mathsf{Iter}_n$. We call this no-go result, the \emph{no-iteration of unknown quantum gates} and provide two proofs for that. One, for the case that the output states are product states, is based on the linearity of the quantum circuit that implements $\mathsf{Iter_n}$, and a more general proof, available in Appendix \ref{app:-1}, is a corollary of a lower bound on the performance of quantum search algorithms. As another confirmation for the validity of this theorem, the optimum fidelity for approximating $\mathsf{Iter_n}$ is obtained numerically for $d=2, 3$ in Sec. \ref{sec:Optimal strategies}, and as expected, it is less than $1$.
\begin{theorem}\label{thm: no-rep}
The universal deterministic gate iterator of order $n$, $\mathsf{Iter_n}$, cannot be implemented perfectly.
\end{theorem}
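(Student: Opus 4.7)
The plan is to exploit the linearity of any quantum circuit that calls $\mathcal{U}$ only once. By Stinespring dilation, the most general such deterministic circuit may be written in the form
\begin{align}
\ket{\mathrm{out}}(U) = W(\mathds{1}_A \pr{U_B})\, V(\ket{\psi}\pr{\ket{0}}),
\end{align}
where $V$ and $W$ are fixed isometries implementing the preprocessing and postprocessing (with any internal ancilla absorbed into the dilation), and $U_B$ is the single allowed application of the oracle on register $B$. For fixed $V$, $W$, $\ket{\psi}$ this output vector is manifestly linear in the matrix entries of $U$.

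Next I would evaluate the circuit on the one-parameter family $U_\theta := e^{i\theta H}$, with $H$ Hermitian, spectrum $\{\lambda_j\}_{j=1}^d$, and spectral projectors $\{\Pi_j\}$. Expanding $U_\theta = \sum_j e^{i\theta\lambda_j}\Pi_j$ yields
\begin{align}
\ket{\mathrm{out}}(U_\theta) = \sum_j e^{i\theta\lambda_j}\ket{w_j},\qquad \ket{w_j} := W(\mathds{1}_A\pr{\Pi_j})V(\ket{\psi}\pr{\ket{0}}),
\end{align}
a trigonometric polynomial in $\theta$ with frequencies in $\{\lambda_j\}$. Tracing out everything but the primary output register gives a reduced density matrix $\rho_1(\theta) := \Tr_{\mathrm{anc}}\bigl(\ket{\mathrm{out}}(U_\theta)\bra{\mathrm{out}}(U_\theta)\bigr)$ whose Fourier spectrum in $\theta$ is contained in the set of differences $\{\lambda_j - \lambda_k : 1\le j,k\le d\}$.

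On the other hand, if the circuit were a perfect $\mathsf{Iter}_n$, the marginal on the primary register would have to equal $U_\theta^n\ket{\psi}\bra{\psi}(U_\theta^n)^\dagger = \sum_{j,k} e^{in\theta(\lambda_j - \lambda_k)}\,\Pi_j\ket{\psi}\bra{\psi}\Pi_k$, whose Fourier spectrum lies in $\{n(\lambda_j - \lambda_k)\}$. To turn this into a concrete contradiction for $n\ge 2$, I would specialise to $d=2$ with $H = Z$ (so $\lambda_{1,2} = \pm 1$) and $\ket{\psi} = (\ket{0}+\ket{1})/\sqrt{2}$: the off-diagonal entry $\bra{0}\rho_1(\theta)\ket{1}$ is constrained by the linear circuit to lie in the span of $\{1,\,e^{\pm 2i\theta}\}$, whereas perfect iteration demands it equal $\tfrac{1}{2}e^{2in\theta}$, impossible already at $n=2$. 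The extension to $d>2$ is immediate by restricting $H$ to any two-dimensional invariant subspace.

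The main technical obstacle is that the ancillary output is allowed to depend on $U$, and in the most general scenario may even be entangled with the primary register, so a naive linearity argument on the full state vector is muddied. The essential move is to pass to $\rho_1(\theta)$: this both removes the unknown ancillary degrees of freedom and converts the ``output linear in $U$'' constraint into a ``linear in $U\pr{U^{\ast}}$'' (equivalently, bilinear in $U$ and $U^{\dagger}$) constraint on $\rho_1$, against which the degree-$n$ polynomial dependence forced by $U_\theta^n\ket{\psi}\bra{\psi}(U_\theta^n)^\dagger$ cannot be matched for $n\ge 2$. The Stinespring reduction to isometric pre- and postprocessing is a second, routine technical step that must be in place before the linearity argument applies.
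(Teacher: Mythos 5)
Your proof is correct, and its skeleton is the same as the paper's: exploit linearity of the single-call circuit in $U$, evaluate it on a one-parameter unitary family $e^{i\theta H}$, and derive a contradiction from a mismatch of trigonometric degrees (the paper uses exactly such a family, $U_\theta=\cos\theta\,\mathds{1}+i\sin\theta\,\Omega=e^{i\theta\Omega}$ with $\Omega$ the Hermitian flip permutation, and compares frequencies $\pm 1$ against $\pm n$ via a chosen matrix element). Where you genuinely differ is in how the ancilla is handled: the paper's main-text proof assumes the output is a product $U^n\ket{\psi}\otimes\ket{a_U}$ and manipulates the unknown $U$-dependent ancilla vectors $\ket{a_{\mathds{1}}},\ket{a_\Omega},\ket{a_\theta}$ explicitly, relegating the general (possibly entangled, mixed-input) case to an appendix that uses a completely different argument based on the Boyer--Brassard--H{\o}yer--Tapp query lower bound for Grover search. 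You instead pass to the reduced density matrix $\rho_1(\theta)$ of the primary register, whose Fourier spectrum is confined to the eigenvalue differences $\{\lambda_j-\lambda_k\}$, against which the required off-diagonal $\tfrac12 e^{2in\theta}$ (frequency $2n\geq 4$, outside $\{0,\pm2\}$) is impossible by linear independence of distinct Fourier modes. This buys you a single argument that never touches the unknown ancilla states and covers entangled ancillary outputs automatically (strictly speaking, purity of the demanded marginal already forces product structure, so for exact iteration on pure inputs the two formulations are equivalent in strength, but yours is cleaner and does not need the appendix's separate query-complexity proof for that case); the paper's appendix argument remains more general only insofar as it also dispenses with other idealizations. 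Your extension to $d>2$ by embedding a qubit-like $H$ (e.g.\ $H=\ket{0}\bra{0}-\ket{1}\bra{1}$) is fine, since the circuit-side frequencies stay in $\{0,\pm1,\pm2\}$ while the target demands $2n$.
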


\begin{proof}[\textbf{Proof}]
The most general quantum circuit that uses a single copy of $\mathcal{U}$ to implement $\mathcal{U}^n$ is depicted in Fig. \ref{fig:ckt} \cite{Arch, Supermaps}. $\mathcal{A}_n$ and $\mathcal{B}_n$ are preprocessing and postprocessing gates respectively, and $\ket{0}$ shows the ancillary system. This circuit transforms inputs to $B_n(U\otimes\mathds{1})A_n\ \ket{\psi}\otimes \ket{0}$. In this proof, it is assumed that input states are pure and output states are product states (this is relaxed in the alternate proof, see Appendix \ref{app:-1}), so the output is of the form $U^n \ket{\psi}\otimes \ket{a_U}$ where $\ket{a_U}$ is the output ancillary system that possibly depends on U.\\

To prove the theorem, it must be shown that no quantum gates $\mathcal{A}_n$ and $\mathcal{B}_n$ can be found such that for all unitary gates $\mathcal{U}$ 
\begin{align}
B_n(U\otimes\mathds{1})A_n\ \ket{\psi}\otimes \ket{0}=U^n\ket{\psi}\otimes\ket{a_U}. \label{eq:5}
\end{align}
This can be seen by noticing the linearity of the LHS of Eq. (\ref{eq:5}) with respect to $U$, while the RHS seems not to be so.
\begin{figure}
\centering 
\includegraphics[width=0.45\textwidth]{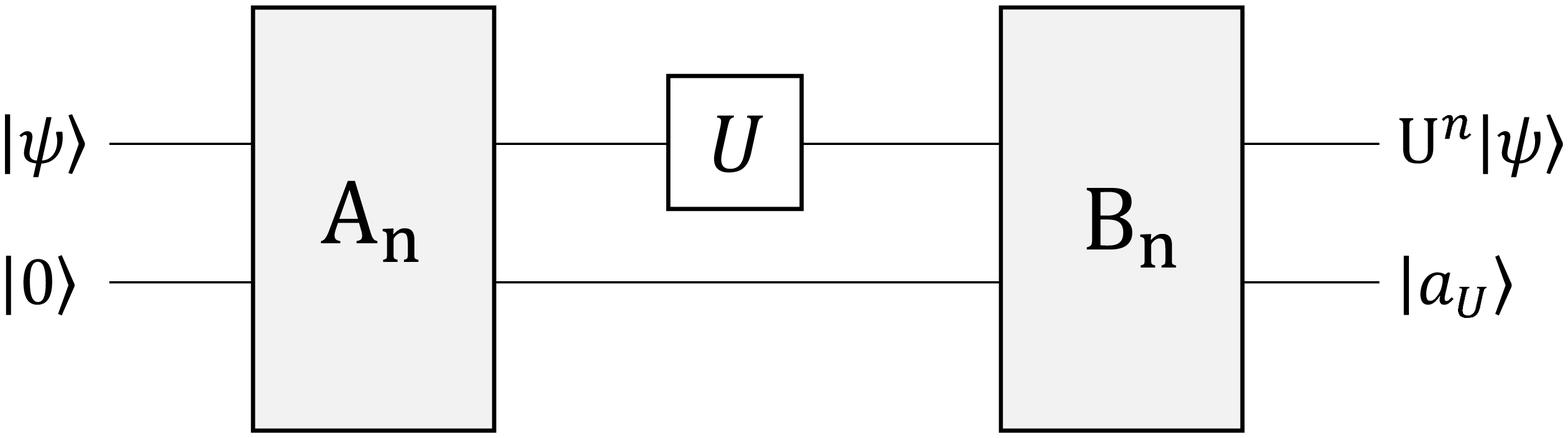}
\caption{The Stinespring realization of the quantum circuit of $\mathsf{Iter}_n$}
\label{fig:ckt} 
\end{figure}
In order to show that this is not possible, we use the linearity of quantum mechanics. To proceed, we need two unitary operators so that their linear combinations is also unitary. We take these operators to be $\mathds{1}$ and $\Omega=\sum_{k=0}^{d-1} \ket{k}\bra{d-k-1}= \left(\begin{smallmatrix}
• & • & 1 \\ 
• & \iddots & • \\ 
1 & • & •
\end{smallmatrix}\right) $.  Note that $\Omega$ is both unitary and Hermitian, so $\Omega^2=\mathds{1}$, which makes 
 $U:= \cos\theta \ \mathds{1}+i \sin\theta \ \Omega$, also unitary for every $\theta$.  Therefore, we should have
\begin{align}
B_n(\mathds{1}\otimes\mathds{1})A_n\ \ket{\psi}\otimes \ket{0} &=\mathds{1}\ket{\psi}\otimes \ket{a_{\mathds{1}}} \cr
B_n(\Omega\otimes\mathds{1})A_n\ \ket{\psi}\otimes\ket{0} &=\Omega^n \ket{\psi}\otimes \ket{a_{\Omega}} \\
B_n\left((\cos\theta \ \mathds{1} + i\sin\theta \ \Omega)\otimes\mathds{1}\right) A_n\ \ket{\psi}\otimes \ket{0} & =(\cos\theta \ \mathds{1} + i\sin\theta \ \Omega)^n\ket{\psi}\otimes \ket{a_{\theta}},\nonumber
\end{align}
where $\ket{a_{\theta}}$ stands for $\ket{a_{\cos\theta \mathds{1} + i\sin\theta \Omega}}$. Using the first two equations in the LHS of the third, we find 
\begin{align}
\cos \theta\ \mathds{1}\ket{\psi}\otimes \ket{a_{\mathds{1}}} + i \sin\theta\ \Omega^n \ket{\psi}\otimes \ket{a_{\Omega}} =\left(\cos n\theta\ \mathds{1} + i\sin n\theta\ \Omega\right) \ket{\psi}\otimes \ket{a_{\theta}}.
\end{align}

By looking at a specific entry of the first factor, we get
\begin{align}
\cos \theta\ \bra{0}\mathds{1} \ket{d-1}\ \ket{a_{\mathds{1}}}+ i \sin\theta\ \bra{0}\Omega^n
\ket{d-1}\ \ket{a_{\Omega}} =\bra{0}\left(\cos n\theta\ \mathds{1} + i\sin n  \theta\ \Omega\right) \ket{d-1}\  \ket{a_{\theta}},
\end{align}
but since $\Omega^n=\Omega$ for odd $n$, and $\Omega^n=\mathds{1}$ for even $n$, and $\bra{0}\mathds{1} \ket{d-1}=0$, $\bra{0} \Omega \ket{d-1}=1$, we find

\begin{align}
\sin n\theta\ =\left\{
\begin{array}{c l}      
    0& n~ even\\
   \pm \sin \theta\ & n~ odd
\end{array}\right.
\end{align}
that cannot be satisfied for arbitrary $\theta$.\\
\end{proof}

Although it is impossible to perfectly iterate an unknown gate in $\mathbb{U}(d)$, if a unitary is randomly picked from a set of \emph{jointly perfectly discriminable unitaries}, then clearly it is possible to iterate it. Whether or not the set of perfectly discriminable unitaries is the only set with this property, is an open question and remains for further investigation \cite{Referee_note}. \\

In the forthcoming sections, we explore different schemes that not perfectly but approximately bypass the introduced no-go result.
\section{The random guess strategy}
\label{sec:random-guess}
In this section, we investigate the random guess strategy in which the input gate is discarded and iterations of a randomly chosen unitary channel are applied to the input state. The random gate is chosen according to a probability distribution induced by normalized Haar measure on $\mathbb{U}(d)$. Therefore, it can be expressed with the following process
\begin{align}\label{eq:-16}
\mathcal{J}_n(\rho)&=\int dV\ V^n\rho V^{n \dagger}.
\end{align}
The motivation for studying this rather simple or blind strategy is that it plays an important role for understanding and comparing the performance of other strategies discussed in the following sections.\\

Let us begin by a theorem on the fidelity of this process:
 \begin{theorem}\label{thm:random-guess}
The fidelity of the random guess strategy is 
\begin{align}
F_{rand,n}=p_n^2+\frac{1-p_n^2}{d^2},
\end{align}
where
\begin{align}
p_n=\frac{min(n,d)-1}{d^2-1}.\label{eq:6}
\end{align}
\end{theorem}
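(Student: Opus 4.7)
The plan is to exploit the $\mathbb{U}(d)$-symmetry of the Choi operator $R_{\mathcal{J}_n}$ and reduce the calculation to a single classical Haar moment. Writing the fidelity with target $\mathcal{U}^n$ via Eq.~(\ref{eq:-10}) gives
\begin{align}
F_{\mathrm{rand},n} = \int dU\, \braa{U^n} R_{\mathcal{J}_n}\kett{U^n}, \qquad R_{\mathcal{J}_n}=\int dV\, \kett{V^n}\braa{V^n}.
\end{align}

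The key observation is that $R_{\mathcal{J}_n}$ commutes with $W\otimes W^*$ for every $W\in\mathbb{U}(d)$: the change of variables $V\mapsto WVW^\dagger$ preserves Haar measure and $(WVW^\dagger)^n=WV^nW^\dagger$ translates into $\kett{WV^nW^\dagger}=(W\otimes W^*)\kett{V^n}$. By Schur's lemma, the commutant of this diagonal action is spanned by the maximally entangled projector $\Pi_+:=\kett{\mathds{1}}\braa{\mathds{1}}$ and its orthogonal complement, so
\begin{align}
R_{\mathcal{J}_n}=\alpha\,\Pi_+ + \beta\,(\mathds{1}-\Pi_+);
\end{align}
in other words $\mathcal{J}_n$ is a depolarizing channel, and the theorem amounts to pinning down its parameters.

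To fix $\alpha$ I would compute the overlap with $\kett{\mathds{1}}$:
\begin{align}
\alpha = \braa{\mathds{1}} R_{\mathcal{J}_n}\kett{\mathds{1}} = \frac{1}{d^2}\int dV\,|\Tr(V^n)|^2.
\end{align}
Everything now hinges on the classical Haar moment identity $\int dV\,|\Tr(V^n)|^2 = \min(n,d)$, provable either by Weyl integration (Diaconis--Shahshahani) or, more slickly, by expanding the power sum $p_n$ as a signed sum of Schur polynomials supported on hook shapes that fit in a $d$-row strip and invoking Schur orthogonality. Trace normalization $\Tr R_{\mathcal{J}_n}=1$ then yields $\beta=(1-\alpha)/(d^2-1)$.

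Finally I substitute back: $\brakett{U^n}{U^n}=1$ and $|\brakett{\mathds{1}}{U^n}|^2=|\Tr(U^n)|^2/d^2$, and a second application of the same moment identity gives
\begin{align}
F_{\mathrm{rand},n}=\alpha\,\frac{\min(n,d)}{d^2}+\beta\left(1-\frac{\min(n,d)}{d^2}\right).
\end{align}
Writing $m=\min(n,d)$ and simplifying produces $F_{\mathrm{rand},n}=\frac{1}{d^2}+\frac{(m-1)^2}{d^2(d^2-1)}$, which rearranges into $p_n^2+(1-p_n^2)/d^2$ with $p_n=(m-1)/(d^2-1)$. The only genuine technical input is the moment identity $\int dV\,|\Tr(V^n)|^2=\min(n,d)$; the rest is representation-theoretic bookkeeping and algebra.
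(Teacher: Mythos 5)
Your proposal is correct and follows essentially the same route as the paper: both arguments use the conjugation symmetry of the Haar average to show that $\int dV\,\kett{V^n}\braa{V^n}$ has the depolarizing form $\alpha\,\kett{\mathds{1}}\braa{\mathds{1}}+\beta(\mathds{1}-\kett{\mathds{1}}\braa{\mathds{1}})$, fix the parameter with the moment identity $\int dV\,|\Tr(V^n)|^2=\min(n,d)$, and then apply that identity a second time for the outer average over $U$. The only difference is presentational: you invoke Schur's lemma on the commutant of $W\otimes W^*$ at the Choi-operator level and normalize by the trace, whereas the paper reaches the same two-parameter form via the twirling lemma (citing Emerson et al.) and pins down $p_n$ by evaluating the channel on matrix units.
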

Before we give the proof of this theorem, we need two lemmas.
\begin{lemma}\label{Lemma:1}
For all self-adjoint matrices $M\in L(\mathcal{H}_d)$ 
\begin{align}
\mathcal{J}_n(M):=\int dU\ U^n M U^{n \dagger}= p_n M +(1-p_n)\Tr(M)\frac{\mathds{1}}{d},
\end{align}
where the integration is with respect to the normalized Haar measure on $\mathbb{U}(d)$, and $p_n$ is the same as Eq. (\ref{eq:6}).
\end{lemma}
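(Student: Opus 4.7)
The plan is to exploit the unitary covariance inherited from the Haar measure to reduce $\mathcal{J}_n$ to two unknown scalars, and then to determine those scalars using trace-preservation together with one nontrivial moment identity.

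First I would observe that, since the Haar measure on $\mathbb{U}(d)$ is invariant under conjugation $U\mapsto WUW^\dagger$, the change of variables yields
\begin{align*}
W^\dagger\mathcal{J}_n(M)W=\mathcal{J}_n(W^\dagger MW)\qquad\forall\,W\in\mathbb{U}(d).
\end{align*}
Thus $\mathcal{J}_n$ is an intertwiner of the adjoint action of $\mathbb{U}(d)$ on $L(\mathcal{H}_d)$. The latter decomposes as $L(\mathcal{H}_d)=\mathbb{C}\mathds{1}\oplus L_0(\mathcal{H}_d)$, with $L_0$ the $(d^2-1)$-dimensional subspace of traceless operators, and this subrepresentation is the standard adjoint representation and hence irreducible. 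Since the two summands are inequivalent, Schur's lemma forces
\begin{align*}
\mathcal{J}_n(M)=a_n M+b_n\,\Tr(M)\,\mathds{1}
\end{align*}
for scalars $a_n,b_n$ depending only on $n$ and $d$.

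Next I would pin down $a_n$ and $b_n$ via two independent linear relations. Inserting $M=\mathds{1}$, or equivalently invoking trace-preservation of $\mathcal{J}_n$, gives $a_n+b_n d=1$. For the second relation I would compute $\Tr(\mathcal{J}_n)$, the trace of $\mathcal{J}_n$ as a linear endomorphism of the $d^2$-dimensional Hilbert--Schmidt space $L(\mathcal{H}_d)$, in two different ways. Reading off the matrix elements of the ansatz in the basis $\{\ket{i}\bra{j}\}_{i,j}$ yields $\Tr(\mathcal{J}_n)=a_n d^2+b_n d$. On the other hand, the conjugation $M\mapsto VMV^\dagger$ is a linear map on $L(\mathcal{H}_d)$ whose trace equals $|\Tr V|^2$, so averaging over $V=U^n$ gives
\begin{align*}
\Tr(\mathcal{J}_n)=\int dU\,|\Tr(U^n)|^2.
\end{align*}
Eliminating $b_n$ from the two equations yields $a_n=\bigl(\int dU\,|\Tr(U^n)|^2-1\bigr)/(d^2-1)$.

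The only nontrivial step, and therefore the main obstacle, is to establish the moment identity $\int dU\,|\Tr(U^n)|^2=\min(n,d)$. I would obtain it from the Murnaghan--Nakayama hook expansion of the power-sum symmetric function into Schur functions,
\begin{align*}
\Tr(U^n)=\sum_{r=0}^{n-1}(-1)^r\,s_{(n-r,1^r)}(U),
\end{align*}
together with two classical facts: the Schur polynomials are orthonormal with respect to Haar measure on $\mathbb{U}(d)$, and $s_\lambda$ vanishes identically whenever the partition $\lambda$ has more than $d$ parts. The hook $(n-r,1^r)$ has exactly $r+1$ parts, so it survives precisely when $r\in\{0,1,\dots,\min(n,d)-1\}$. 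The surviving summands are orthonormal with coefficients $\pm1$, so the $L^2$-norm squared equals $\min(n,d)$. Substituting back gives $a_n=p_n=(\min(n,d)-1)/(d^2-1)$ and $b_n=(1-p_n)/d$, which matches the claimed formula.
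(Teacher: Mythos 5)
Your proof is correct, and its overall strategy coincides with the paper's: use the symmetry of the Haar average to force the depolarizing form $\mathcal{J}_n(M)=a_nM+b_n\Tr(M)\mathds{1}$, then fix the free parameter through the second moment $\int dU\,|\Tr(U^n)|^2$ (your superoperator-trace computation is the same contraction the paper performs by applying $\mathcal{J}_n$ to $\ket{i}\bra{j}$ and summing matrix elements). The difference is that where the paper imports both ingredients from the literature — the fact that a unitarily covariant (twirled) channel is depolarizing is cited from Emerson et al., and the identity $\int dU\,|\Tr(U^n)|^2=\min(n,d)$ is cited as Theorem 2.1 of Diaconis--Evans — you prove them from first principles: Schur's lemma applied to the decomposition $L(\mathcal{H}_d)=\mathbb{C}\mathds{1}\oplus L_0(\mathcal{H}_d)$ with the adjoint part irreducible, and the hook expansion $\Tr(U^n)=\sum_{r=0}^{n-1}(-1)^r s_{(n-r,1^r)}(U)$ combined with orthonormality of Schur functions and the vanishing of $s_\lambda$ when $\lambda$ has more than $d$ rows. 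The paper's version is shorter; yours is self-contained and has the added merit of making the origin of the $\min(n,d)$ — and hence of the ``phase transition'' at $n=d$ discussed later in the paper — completely transparent: hooks with more than $d$ rows drop out, which is exactly the representation-theoretic mechanism behind the saturation of $p_n$ for $n\geq d$. The only cosmetic points worth tightening are that inserting $M=\mathds{1}$ is unitality rather than trace-preservation (both hold for $\mathcal{J}_n$ and both yield $a_n+b_nd=1$, so nothing breaks), and that the Schur-lemma step implicitly assumes $d\geq 2$, where the lemma is nontrivial anyway.
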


\begin{proof}[\textbf{Proof of Lemma}]
Let $\mathcal{E}$ be any quantum channel. The \emph{twirled transformation} associated with $\mathcal{E}$ is defined as
\begin{align}
\tilde{\mathcal E}(M):=\int dV\ V \mathcal{E} (V^{\dagger}M V) V^{\dagger}. \label{eq:7}
\end{align}
It is shown in in \cite{Emerson}, that  the twirled transformation acts like a depolarizing channel with parameter $p_{n,\mathcal{E}}$ that depends on the original channel $\mathcal{E}$:
 \begin{align}
\tilde{\mathcal{E}}(M)=p_{n,\mathcal{E}} M +(1-p_{n,\mathcal{E}})\Tr(M)\frac{\mathds{1}}{d}.
\end{align}

For the specific channel $\mathcal{J}_n$, the twirled channel $\tilde{\mathcal{J}_n}$ equals $\mathcal{J}_n$ itself. To see this, we note that 

\begin{align}
\tilde{\mathcal{J}}_n(\rho)=\int dV \int dU\ (VU^nV^{\dagger}) \rho (VU^{n \dagger}V^{\dagger})\label{eq:8}.
\end{align}
By defining $W:=VUV^{\dagger}$ and using right and left invariance of Haar measure, we find   
\begin{align}
\tilde{\mathcal{J}_n}(\rho)&=\int dV \int dW\ W^n \rho W^{n \dagger}=\int dV \mathcal{J}_n(\rho)=\mathcal{J}_n(\rho),
\end{align}
where we have used the normalization $\int dV=\mathds{1}$. It remains to determine the value of the parameter $p_n:=p_{n,\mathcal{J}_n}$. To do this, we enact the channel $\mathcal{J}_n$ on the matrix $|i\ra\la j|$ to  obtain
\be
\int dU\ U^n |i\ra \la j|  U^{n \dagger}= p_n |i\ra \la j| +(1-p_n)\delta_{ij}\frac{\mathds{1}}{d}.
\ee
Multiplying both sides by $\la i|$ and $|j\ra$ and summing over $i$ and $j$, we find
\be
\int dU |\Tr(U^n)|^2=p_n d^2+(1-p_n).
\ee

Using  theorem 2.1 of \cite{Diaconis}, according to which $\int dU |\Tr(U^n)|^2=min(n,d)$, we finally find the value of $p_n$:
\begin{align}
p_n=\frac{min(n,d)-1}{d^2-1}.
\end{align}
This completes the proof.
\end{proof}

\begin{corollary}\label{Lemma:2}
For all self-adjoint matrices $M \in L(\mathcal{H}^{(1)}_{d}\pr{\mathcal{H}^{(2)}_{d}})$
\begin{align}
\mathcal{K}_n(M):&=\int dU (U^n \pr{\mathds{1})}\ M\ (U^{n \dagger}\pr{\mathds{1}}) \label{eq:12} \nonumber \\
		  &=p_n M+(1-p_n)\frac{\mathds{1}\pr \Tr_1(M)}{d}.
\end{align}
\end{corollary}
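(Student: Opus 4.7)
The plan is to exploit that $\mathcal{K}_n$ factorizes as $\mathcal{J}_n\otimes\mathcal{I}$ on the two tensor factors, so that the bipartite statement follows from Lemma~\ref{Lemma:1} applied slot-by-slot on the first factor. Since the unitary in the integrand acts trivially on the second system, nothing beyond Lemma~\ref{Lemma:1} is needed; the work is purely bookkeeping.

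First, I would fix a basis $\{|k\rangle\}$ of $\mathcal{H}_d^{(2)}$ and decompose $M=\sum_{k,l} M_{kl}\otimes|k\rangle\langle l|$ with $M_{kl}\in L(\mathcal{H}_d^{(1)})$. Because integration is linear and $(U^n\otimes\mathds{1})$ touches only the first factor, the integral passes through the basis expansion to give
\begin{align}
\mathcal{K}_n(M)=\sum_{k,l}\mathcal{J}_n(M_{kl})\otimes|k\rangle\langle l|,
\end{align}
which is exactly the statement $\mathcal{K}_n=\mathcal{J}_n\otimes\mathcal{I}$.

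Second, I would apply Lemma~\ref{Lemma:1} to each block $M_{kl}$. The self-adjointness assumption in that lemma can be dropped, because both sides are $\mathbb{C}$-linear in the argument and any operator is a complex combination of two self-adjoint operators (its Hermitian and anti-Hermitian parts). This yields
\begin{align}
\mathcal{K}_n(M)=p_n\sum_{k,l}M_{kl}\otimes|k\rangle\langle l|+(1-p_n)\,\frac{\mathds{1}}{d}\otimes\sum_{k,l}\Tr(M_{kl})\,|k\rangle\langle l|.
\end{align}

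Finally, I would identify the two resulting sums. The first reassembles to $M$. For the second, writing the entries of $M$ in the product basis as $M_{ij,kl}$, one has $\Tr(M_{kl})=\sum_i M_{ii,kl}=(\Tr_1 M)_{kl}$, so that $\sum_{k,l}\Tr(M_{kl})\,|k\rangle\langle l|=\Tr_1(M)$. Putting these together reproduces
\begin{align}
\mathcal{K}_n(M)=p_n M+(1-p_n)\,\frac{\mathds{1}\otimes\Tr_1(M)}{d}.
\end{align}
There is no genuine obstacle in this argument; the only items worth treating with care are the partial-trace convention and the (routine) extension of Lemma~\ref{Lemma:1} from self-adjoint to arbitrary operators by linearity.
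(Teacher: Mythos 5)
Your proof is correct and is essentially the argument the paper implicitly relies on: the corollary is stated as a direct consequence of Lemma~\ref{Lemma:1}, obtained precisely by noting that $\mathcal{K}_n=\mathcal{J}_n\otimes\mathcal{I}$ and applying the lemma on the first factor, with the linearity extension to non-self-adjoint blocks being routine. Your bookkeeping with $\Tr_1$ is also consistent with the later use of the corollary for $M=\kett{\mathds{1}}\braa{\mathds{1}}$.
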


\begin{corollary}\label{cor:1}
By substituting $M=\kett{\mathds{1}}\braa{\mathds{1}}$ in Eq. (\ref{eq:12}), we get
\begin{align}
\int dU\ \kett{U^n}\braa{U^n}=p_n \kett{\mathds{1}}\braa{\mathds{1}}+(1-p_n)\frac{\mathds{1}}{d^2}.\label{eq:-14}
\end{align}
\end{corollary}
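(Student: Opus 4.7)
The plan is to do exactly what the corollary statement advertises: plug $M=\kett{\mathds{1}}\braa{\mathds{1}}$ into the formula established in Corollary~\ref{Lemma:2} and simplify both sides separately.

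First I would rewrite the left-hand side using the vectorization identity from Eq.~(\ref{eq:-8}), namely $\kett{V}=(V\otimes\mathds{1})\kett{\mathds{1}}$ for any linear operator $V$. Applying this with $V=U^n$ gives $(U^n\otimes\mathds{1})\kett{\mathds{1}}=\kett{U^n}$, and dually $\braa{\mathds{1}}(U^{n\dagger}\otimes\mathds{1})=\braa{U^n}$. Sandwiching these around $\kett{\mathds{1}}\braa{\mathds{1}}$ immediately converts the integrand $(U^n\otimes\mathds{1})\,\kett{\mathds{1}}\braa{\mathds{1}}\,(U^{n\dagger}\otimes\mathds{1})$ into $\kett{U^n}\braa{U^n}$, producing the LHS of the corollary after integrating over $U$.

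Next I would simplify the right-hand side, for which the only nontrivial piece is the partial trace $\Tr_1(\kett{\mathds{1}}\braa{\mathds{1}})$. Writing $\kett{\mathds{1}}=\frac{1}{\sqrt{d}}\sum_j\ket{j}\ket{j}$ and expanding,
\begin{align}
\Tr_1(\kett{\mathds{1}}\braa{\mathds{1}})
=\frac{1}{d}\sum_{j,k}\Tr(\ket{j}\bra{k})\,\ket{j}\bra{k}
=\frac{1}{d}\sum_{j}\ket{j}\bra{j}=\frac{\mathds{1}}{d}.
\end{align}
Substituting this into the right-hand side of Eq.~(\ref{eq:12}) turns $(1-p_n)\,\mathds{1}\otimes\Tr_1(M)/d$ into $(1-p_n)\,\mathds{1}/d^2$, while the first term remains $p_n\kett{\mathds{1}}\braa{\mathds{1}}$, yielding exactly Eq.~(\ref{eq:-14}).

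There is essentially no obstacle here: the statement is a direct specialization of Corollary~\ref{Lemma:2}, and the only technical step is the one-line partial-trace computation above. The proof is therefore a short verification rather than a new argument.
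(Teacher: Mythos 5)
Your proof is correct and follows exactly the route the paper intends: the corollary is obtained by substituting $M=\kett{\mathds{1}}\braa{\mathds{1}}$ into Eq.~(\ref{eq:12}), simplifying the left side via $\kett{U^n}=(U^n\otimes\mathds{1})\kett{\mathds{1}}$ and the right side via $\Tr_1(\kett{\mathds{1}}\braa{\mathds{1}})=\mathds{1}/d$. Your one-line partial-trace computation fills in the only detail the paper leaves implicit; nothing more is needed.
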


\begin{proof}[\textbf{Proof of Theorem \ref{thm:random-guess}}]
According to Eq. (\ref{eq:-10}), approximating the iteration of a given gate $\mathcal{U}$ with the iteration of a Haar-distributed random gate $\mathcal{V}$ has the fidelity $\mathcal{F}(\mathcal{U}^n, \mathcal{V}^n)=|\brakett{U^n}{V^n}|^2$, so its expected value is
\begin{align}
\mathbb{E}\left[ \mathcal{F}(\mathcal{U}^n,\mathcal{V}^n) \right]&=\braa{U^n}\left(\int dU \kett{V^n}\braa{V^n}\right)\kett{U^{n}} \\
&=p_n\ |\brakett{U^n}{\mathds{1}}|^2+(1-p_n)\frac{1}{d^2},
\end{align}
where the second equality follows from Eq. (\ref{eq:-14}). The fidelity of the random guess strategy can be obtained by taking the average over all $\mathcal{U}$'s:
\begin{align}
F_{rand,n}&=\int dU\ \mathbb{E}\left[ \mathcal{F}(\mathcal{U}^n,\mathcal{V}^n) \right] \label{eq:-15}\\
								  &=p_n \int dU |\brakett{U^n}{\mathds{1}}|^2+(1-p_n)\frac{1}{d^2}.
\end{align}
Again, the integral is simplified using Eq. (\ref{eq:-14}):
\begin{align}
F_{rand,n} =p_n^2+\frac{1-p_n^2}{d^2}.
\end{align}
\end{proof}

As stated in Theorem \ref{thm:random-guess} and depicted in Fig. \ref{fig:plot1}, the fidelity of the random guess decreases quadratically with growth of the dimension $d$, but less intuitively is a kind of \emph{phase transition} occurs at $n=d$: the fidelity increases with growth of $n$ for $n<d$, and reaches a constant value for $n\geq d$. The random guess is not the only scheme with such phase transition, and as we shall see in next sections, this is a characteristic of all of our approximating strategies. The same phenomena is observed in a similar context when dealing with the joint distribution $f(\theta_1,\dots,\theta_d)$ of eigenvalues $\{e^{i \theta_j}\}_{j=1}^d$ of a Haar-distributed unitary matrix in $\mathbb{U}(d)$ \cite{Diaconis_2}, that is 
\begin{align}\label{eq:-9}
f(\theta_1,\dots,\theta_d)=\frac{1}{(2\pi)^d d!}\prod_{j<k} |e^{i\theta_j}-e^{i\theta_k}|^2,
\end{align}
so when $\theta_j\rightarrow \theta_k$, $f\rightarrow 0$, and eigenvalues somehow repel each other. To see this more intuitively, consider $d$ identically charged particles confined to move on the unit circle with Coulomb interaction between them. Their associated Gibbs distribution is
 \begin{align}
f(\theta_1,\dots,\theta_d)=\frac{1}{(2\pi)^d d!}e^{-\beta H(\theta_1,\dots,\theta_d)},
\end{align}
with the Hamiltonian $H=-\sum_{j<k} \log|e^{i\theta_j}-e^{i\theta_k}|$ and $\beta=2$. 
This is the same distribution as in Eq. (\ref{eq:-9}) and the repulsion of eigenvalues comes to have a clear physical meaning, and is similar to the repulsion between particles in the ordinary Coulomb gas. \\

When the joint distribution of eigenvalues of higher powers is considered, a phase transition occurs, and for $n\geq d$, the eigenvalues of $U^n$ are exactly distributed as $d$ points chosen independently and uniformly on the unit circle \cite{Rains}. Thus, the eigenvalues that seem to have an ordered structure and are very neatly spaced for $n=1$ have no structure for $n\geq d$.\\

To see the connection of this result to the fidelity of the random guess, notice that from the proof of Lemma \ref{Lemma:1}, the parameter $p_n$ in Eq. (\ref{eq:6}) is 
\begin{align}
p_n=\frac{\int dU\ |\Tr(U^n)|^2-1}{d^2-1},
\end{align}
but $\int dU\ |\Tr(U^n)|^2$ depends on the joint distribution of eigenvalues of $U^n$. For $n\geq d$, this distribution remains the same and $\int dU\ |\Tr(U^n)|^2=d$, so $p_n$ and fidelity also remain constant. \\

Finally, we prove that there exists a depolarizing channel whose fidelity equals the fidelity of the random guess and may be considered as an implementation of that.
\begin{theorem}\label{thm:ckt for random}
The fidelity of the random guess strategy for approximating the $n$'th iteration of a given unknown unitary $\mathcal{U}$ is equal to the fidelity of the following depolarizing channel
\begin{align}
\mathcal{J}_n(\rho)&=\int dV\ V^n\rho V^{n \dagger}\nonumber\\
&=p_n \rho +(1-p_n)\frac{\mathds{1}}{d},
\label{eq:20}
\end{align}
with $p_n=\frac{min(n,d)-1}{d^2-1}$, the same as in Eq. (\ref{eq:6}).
\end{theorem}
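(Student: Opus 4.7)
My proof plan is structured around the two assertions of the theorem: the depolarizing form of $\mathcal{J}_n$ and the equality of its fidelity with $F_{rand,n}$. The first assertion is an immediate specialization of Lemma \ref{Lemma:1} to the self-adjoint, unit-trace matrix $M=\rho$, so no separate argument is required beyond observing that every density matrix satisfies the hypothesis of that lemma; the parameter $p_n$ that appears is exactly the one already identified there.

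For the fidelity equality, I would exploit that the formula $\mathcal{F}(\mathcal{U},\mathcal{E})=\braa{U}R_{\mathcal{E}}\kett{U}$ of Eq.~(\ref{eq:-10}) is linear in the Choi operator of the approximating channel, and that the Choi operator is itself linear in the channel. Expectation over the random unitary $V$ can therefore be moved inside the fidelity,
\begin{align*}
\mathbb{E}_V\bigl[\mathcal{F}(\mathcal{U}^n,\mathcal{V}^n)\bigr]
=\braa{U^n}\!\left(\int dV\,\kett{V^n}\braa{V^n}\right)\!\kett{U^n}
=\mathcal{F}(\mathcal{U}^n,\mathcal{J}_n),
\end{align*}
since $R_{\mathcal{J}_n}=\int dV\,\kett{V^n}\braa{V^n}$ by the definition of the Choi operator together with Eq.~(\ref{eq:-8}). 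Integrating both sides against $dU$ yields $F_{rand,n}=F(\mathcal{J}_n)$, which is precisely what the theorem asserts.

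The only conceptual point — and arguably the only one — is to recognize that the Haar-average of the random guess's fidelity is not merely numerically equal to, but literally identifiable with, the fidelity of the single effective channel $\mathcal{J}_n$; the rest is bookkeeping. Once this linearity argument is in hand, the theorem follows without any additional Haar integration, because all the relevant moments were already computed in Corollary \ref{cor:1} during the proof of Theorem \ref{thm:random-guess}. Thus I expect no genuine obstacle, only the need to state cleanly why averaging the channel before computing fidelity gives the same answer as averaging the fidelity itself.
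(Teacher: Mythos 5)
Your proposal is correct and follows essentially the same route as the paper: both identify $R_{\mathcal{J}_n}=\int dV\,\kett{V^n}\braa{V^n}$, observe via linearity of Eq.~(\ref{eq:-10}) in the Choi operator that $F(\mathcal{J}_n)$ coincides with the double Haar average $\int dU\int dV\,|\brakett{U^n}{V^n}|^2$ defining $F_{rand,n}$, and obtain the depolarizing form directly from Lemma~\ref{Lemma:1}. No gaps to report.
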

\begin{proof}[\textbf{Proof}]
Consider the quantum channel 
\begin{align}
\mathcal{J}_n(\rho)=\int dV\ V^n\rho V^{n \dagger},
\end{align}
with $V^n$ as its Kraus operators. The Choi operator of this channel is 
\begin{align}
R_{\mathcal{J}_n}=\int dV\ \kett{V^n}\braa{V^n},
\end{align}
so the fidelity of $\mathcal{J}_n$ is
\begin{align}
F(\mathcal{J}_n)&=\int dU  \braa{U^n}R_{\mathcal{J}_n}\kett{U^n}\nonumber\\
&=\int dU \int dV \ |\brakett{U^n}{V^n}|^2,
\end{align}
which is exactly the same as Eq. (\ref{eq:-15}), so
\begin{align}
F(\mathcal{J}_n)=F_{rand,n}.
\end{align}
It is also clear from Lemma \ref{Lemma:1} that $\mathcal{J}_n$ is a polarizing channel
\begin{align}
\mathcal{J}_n(\rho)=p_n\rho+(1-p_n)\frac{\mathds{1}}{d},
\end{align}
with $p_n$ as in Eq. (\ref{eq:6}).
\end{proof}
\section{The estimation strategy}\label{sec:quantum circuit of a repeater}
In the previous scheme, we blindly iterated a random gate and found its fidelity. We now discuss a more prepared and discriminating strategy in which we use more resources. Namely, we estimate the given unitary blackbox and based on the result of the estimation, we choose a gate and perform its iteration on the input state. \\

To make things clear, we can compare it with the random guess circuit that is described by the channel $\mathcal{J}_n(\rho)=\int dV\ V^n\rho V^{n \dagger}$, Eq. (\ref{eq:-16}). $\mathcal{J}_n(\rho)$ is the average of all states $V^n\rho V^{n \dagger}$, and each unitary gate $V$ has the same weight in it. We can use estimation results to give higher weights to more preferable gates. Let us denote the weight of unitary gate $V$ by $\omega_V$, so the action of our approximate gate iterator is to take $\rho$ and gives $\int dV \omega_V V^n\rho V^{n \dagger}$ as the output. As much as these weights are decided correctly, our circuit has higher fidelity than that of the random guess and approximates $\mathsf{Iter_n}$ better. \\

One idea for obtaining reasonable weights $\omega_V$ from a unitary channel with a single try is to encode the effect of the channel into a maximally entangled state and then perform a measurement on the state, the \emph{measure-and-prepare} method \cite{Estimation}. To see how this works, we first notice that according to Corollary \ref{cor:1}
\begin{align}
\int dV \kett{V}\braa{V}=\frac{\mathds{1}}{d}, \label{eq:-18}
\end{align} 
so the set of operators $d\ \kett{V}\braa{V}$ provides bases for a non-orthogonal measurement. On the other hand, the state $\kett{U}$ can be prepared by a single use of $\mathcal{U}$, Eq. (\ref{eq:-8}). Obviously, this measurement cannot perfectly discriminate $\kett{U}$ from other states, and after the measurement, a vector $\kett{V}$ is obtained with probability density $d^2|\brakett{V}{U}|^2$. Thus, the output state is a weighted mean of states $V^n\rho V^{n \dagger}$ with $\omega_V=d^2|\brakett{V}{U}|^2$. \\

As depicted in Fig. \ref{fig:ckt}, the circuit may include a preprocessing unit ($\mathcal{A}_n$) for preparing necessary states for estimation of $\mathcal{U}$, and a postprocessing unit ($\mathcal{B}_n$) for performing the measurement and preparing the output state based on the estimation. Let the input state on which the iteration is performed be $\rho$ and the ancillary system be a bipartite state $\ket{00}\bra{00}$. The preprocessing channel $\mathcal{A}_n$, prepares a maximally entangled state $\kett{\mathds{1}}\braa{\mathds{1}}$ from the ancillary system and swaps that with $\rho$, so that the input state remains unchanged until the estimation results are ready. In other words,
\begin{align}
\mathcal{A}_n(\rho\pr{\ket{00}\bra{00}})
												&=\kett{\mathds{1}}\braa{\mathds{1}}\pr{\rho}.
\end{align}
 Then, $\mathcal{U}\otimes \mathcal{I}$ acts on the entangled ancillary system and state $\kett{U}\braa{U}$ is prepared. In channel  $\mathcal{B}_n$, according to the results of the measurement of $\kett{U}\braa{U}$, unitary gates are performed on $\rho$, so the output is
\begin{align}
\mathcal{B}_n(\kett{U}\braa{U}\pr{\rho})&=\int dV (d^2|\brakett{V}{U}|^2)\  \kett{V}\braa{V}\pr{V^n \rho V^{n \dagger}}.
\end{align}
The action of the whole circuit on $\rho$ is obtained by tracing the output of $\mathcal{B}_n$ over the ancillary system:
\begin{align}\label{eq:19}
\mathcal{D}_{n,\mathcal{U}}(\rho)&:=\Tr_a\left(\mathcal{B}_n(\kett{U}\braa{U}\pr{\rho})\right)\nonumber\\
&=d^2 \int dV |\brakett{V}{U}|^2 \ V^n \rho V^{n \dagger}.
\end{align}
\begin{theorem}
The fidelity of this strategy is 
\begin{align}
F_{est,n}=d^2 \Tr \left(M_n^2\right), \label{eq:18}
\end{align}
with
\begin{align}
M_n:=\int dU\ \kett{U}\braa{U}\ \pr{\kett{U^n}\braa{U^n}}.
\end{align}
\end{theorem}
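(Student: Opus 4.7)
The plan is to directly evaluate $F_{est,n}$ by substituting the explicit form of the channel $\mathcal{D}_{n,\mathcal{U}}$ into the figure of merit defined in Eq. (\ref{eq:-17}), and then manipulate the resulting double Haar integral until it takes the form $d^2 \Tr(M_n^2)$. The whole argument should be essentially a direct calculation, since the structure of the answer already hints at how the two integrations must be paired up.

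First I would read off the Choi operator of the estimation channel. Starting from
\begin{align}
\mathcal{D}_{n,\mathcal{U}}(\rho) = d^2 \int dV\, |\brakett{V}{U}|^2\, V^n \rho V^{n\dagger},
\end{align}
the definition in Eq. (\ref{eq:3}) immediately gives
\begin{align}
R_{\mathcal{D}_{n,\mathcal{U}}} = d^2 \int dV\, |\brakett{V}{U}|^2\, \kett{V^n}\braa{V^n}.
\end{align}
Plugging this into the unitary-channel fidelity formula \eqref{eq:-10}, one obtains
\begin{align}
\mathcal{F}(\mathcal{U}^n, \mathcal{D}_{n,\mathcal{U}}) = d^2 \int dV\, |\brakett{V}{U}|^2\, |\brakett{U^n}{V^n}|^2.
\end{align}

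Next I would average over $\mathcal{U}$ with the Haar measure, as prescribed by Eq. (\ref{eq:-17}) with $P(U)=1$, to get
\begin{align}
F_{est,n} = d^2 \int dU \int dV\, |\brakett{V}{U}|^2\, |\brakett{U^n}{V^n}|^2.
\end{align}
The crux of the proof is then the observation that each squared inner product can be written as a trace of a product of rank-one projectors, and that a product of two traces equals the trace of a tensor product: $\Tr(A)\Tr(B) = \Tr(A\pr{B})$. Applying this twice,
\begin{align}
|\brakett{V}{U}|^2 |\brakett{U^n}{V^n}|^2 = \Tr\!\left((\kett{V}\braa{V}\pr{\kett{V^n}\braa{V^n}})\,(\kett{U}\braa{U}\pr{\kett{U^n}\braa{U^n}})\right),
\end{align}
so that the double integral factorises and each factor is precisely $M_n$, yielding
\begin{align}
F_{est,n} = d^2\, \Tr(M_n \cdot M_n) = d^2\, \Tr(M_n^2),
\end{align}
as claimed.

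I do not anticipate any real obstacle here; the only thing to be a bit careful about is the bookkeeping of which Hilbert-space factors live where (the space on which $\rho$ acts versus the ancillary space used in the Choi construction) when reading off $R_{\mathcal{D}_{n,\mathcal{U}}}$ from $\mathcal{D}_{n,\mathcal{U}}$. Once the fidelity is written as a symmetric double Haar integral in $U$ and $V$, the factorisation into $\Tr(M_n^2)$ is automatic.
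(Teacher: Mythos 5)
Your proposal is correct and follows essentially the same route as the paper: compute the Choi operator $R_{\mathcal{D}_{n,\mathcal{U}}}=d^2\int dV\,|\brakett{V}{U}|^2\,\kett{V^n}\braa{V^n}$, apply the unitary-channel fidelity formula, average over $U$, and recognize the resulting symmetric double Haar integral as $d^2\Tr(M_n^2)$. The only difference is that you spell out the factorization step via $\Tr(A)\Tr(B)=\Tr(A\pr{B})$, which the paper leaves as ``immediate.''
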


\begin{proof}[\textbf{Proof}]
The Choi operator associated with the map $\mathcal{D}_{n,\mathcal{U}}$ is 
\begin{align}
R_{\mathcal{D}_{n,\mathcal{U}}}=d^2\ \braa{U} \left(\int dV\ \kett{V}\braa{V}\ \pr{\kett{V^n}\braa{V^n}}\right) \kett{U}.
\end{align}
The fidelity of this strategy is $F_{est,n}=\int dU\ \Tr(R_{\mathcal{D}_{n,\mathcal{U}}}R_{\mathcal{U}^n})$. By replacing $R_{\mathcal{U}^n}=\kett{U^n}\braa{U^n}$, it is immediate to get Eq. (\ref{eq:18}).
\end{proof}

The matrix $M_n$ can be calculated numerically using Monte Carlo method. One approach is to generate Haar-distributed random unitary matrices in $\mathbb{U}(d)$. Then, the integral in Eq. (\ref{eq:18}) can be approximated by averaging the integrand over these random matrices. A simple algorithm exists for uniform generation of random matrices \cite{Random}. The idea is to generate a matrix $Z$ with \emph{QR-decomposition}
\begin{align}
Z=QR,
\end{align}
where $Q$ is unitary and $R$ is upper-triangular and invertible. Let  $D$ be the diagonalization of $R$ whose entries are divided by their absolute value, then it turns out that if entries of $Z$ are \emph{i.i.d} standard complex normal random variables, the matrix $U=QD$ is distributed according to Haar measure \cite{Random}. \\

The fidelity of the estimation strategy for different iteration orders is depicted in Fig. \ref{fig:plot1}. It can be seen that by increasing the dimension of the input system, fidelity of the proposed circuit decreases and performance of this circuit tends to that of the random guess method. Similar to the case of the random guess, the fidelity of this circuit reaches a constant value and remains the same for higher order iterations.\\
\begin{figure}
 
\includegraphics[width=8.6 cm]{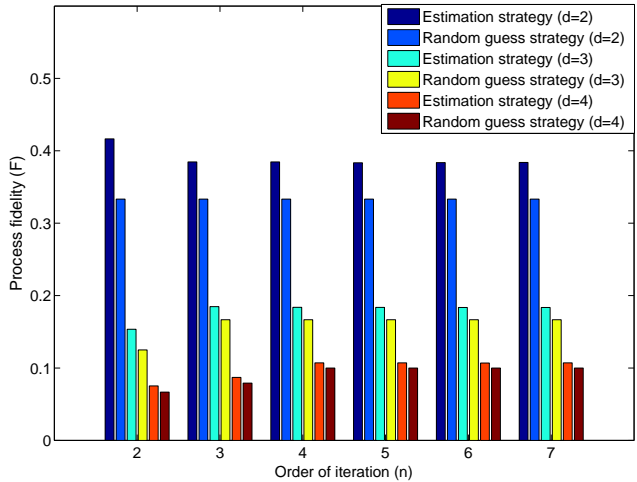}
\caption{
Fidelity of the random guess and the estimation strategies for various orders of iteration in dimensions $d=2, 3$ and $4$. The advantage of estimation strategy over random guess is clear. It is also seen that this advantage tends to decrease with increasing the dimension $d$.
}
\label{fig:plot1} 
\end{figure}

Note that in the estimation strategy we could have replaced the measurement in the over-complete basis in Eq. (\ref{eq:-18}), by a measurement over an orthonormal basis
\begin{align}\label{eq:-19}
&\brakett{U_j}{U_k}=\Tr (U^{\dagger}_j U_k)=0 \quad \forall j, k\in\{1,\dots,d^2\},\ j\neq k, \nonumber \\
&\sum_{j=1}^{d^2} \kett{U_j}\braa{U_j}=\mathds{1}.
\end{align} 
which is a basis of jointly perfectly discriminable gates. In this case, the quantum channel (\ref{eq:19}) would have been replaced by 
\begin{align}
\tilde{\mathcal{D}}_{n,\mathcal{U}}(\rho)=\sum_{j=1}^{d^2} |\brakett{U_j}{U}|^2 \ U_j^n \rho U_j^{n \dagger}.
\end{align}
However, as we will show in the next section, none of these two kinds of estimation are optimal. The same is true in the case of cloning of unitary gates \cite{unitary-cloning}, where $F_{est}$ is even worse than $F_{rand}$ for $d>2$. However, for $n=-1$, i.e, when the unknown gate is to be inverted, the estimation strategy is the optimal scheme \cite{Inverse}. \\
\section{The identity and direct channels strategies}\label{sec:idn strategies}
An apparently trivial strategy, called the \emph{identity channel strategy}, is to take the identity channel as an approximation of $\mathsf{Iter}_n$, i.e., to neglect the given gate $\mathcal{U}$, and to put the input state $\rho$ directly in the output. To see why this approximation is \emph{reasonable}, we note that by using Eq. (\ref{eq:-10}) and Corollary \ref{cor:1}, we get
\begin{align}\label{eq:-11}
\int dU \mathcal{F}(\mathcal{V},\mathcal{U}^n)=p_n|\brakett{V}{\mathds{1}}|^2+\frac{1-p_n}{d^2},
\end{align}
which immediately gives
 \begin{align}
\max_{V\in \mathbb{U}(d)}  \int dU \mathcal{F}(\mathcal{V},\mathcal{U}^n)=\mathcal{F}(\mathds{1},\mathcal{U}^n)
\end{align}
so on average, the identity channel has the maximum similarity to all $\mathcal{U}^n$'s, and in this sense, it is a reasonable approximation of $\mathcal{U}^n$.\\

The fidelity of this channel is obtained by replacing $\mathcal{V}$ with $\mathds{1}$ in Eq. (\ref{eq:-11}), that gives:
\begin{theorem}
The fidelity of the identity channel $F_{iden,n}$ is 
\begin{align}
F_{iden,n}&=\int dU \mathcal{F}(\mathcal{I},\mathcal{U}^n)\nonumber\\
&=p_n+\frac{1-p_n}{d^2}
.
\end{align}
where $p_n$ is given by Eq. (\ref{eq:6}).
\end{theorem}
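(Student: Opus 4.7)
The plan is to obtain $F_{iden,n}$ as a direct specialization of the general formula (\ref{eq:-11}) that has just been derived in the discussion motivating the identity strategy. Since that formula holds for every fixed unitary $\mathcal{V}$ regarded as an approximation of $\mathcal{U}^n$, setting $\mathcal{V}=\mathcal{I}$ should yield the quantity $F_{iden,n}$ immediately; the whole content of the theorem is to evaluate the inner product $|\brakett{\mathds{1}}{\mathds{1}}|^2$ appearing on the right-hand side.

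First I would recall briefly how (\ref{eq:-11}) itself was produced: from Eq. (\ref{eq:-10}) one has $\mathcal{F}(\mathcal{V},\mathcal{U}^n) = |\brakett{V}{U^n}|^2 = \braa{V}\,\kett{U^n}\braa{U^n}\,\kett{V}$, and averaging over the Haar measure on $U$ one can pull the operator $\int dU\,\kett{U^n}\braa{U^n}$ through, which by Corollary \ref{cor:1} equals $p_n \kett{\mathds{1}}\braa{\mathds{1}} + (1-p_n)\mathds{1}/d^2$. Contracting on both sides with $\braa{V}$ and $\kett{V}$ and using $\brakett{V}{V} = \Tr(V^\dagger V)/d = 1$ for any unitary $V$ gives (\ref{eq:-11}).

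Next, I would substitute $V=\mathds{1}$. The only computation to check is that
\begin{equation*}
|\brakett{\mathds{1}}{\mathds{1}}|^2 = \Bigl|\tfrac{1}{d}\sum_j \la j|j\ra\Bigr|^2 = 1,
\end{equation*}
which follows from the definition of the vectorized form. Plugging this into (\ref{eq:-11}) with $\mathcal{V}=\mathcal{I}$ gives
\begin{equation*}
F_{iden,n} = \int dU\,\mathcal{F}(\mathcal{I},\mathcal{U}^n) = p_n\cdot 1 + \frac{1-p_n}{d^2} = p_n + \frac{1-p_n}{d^2},
\end{equation*}
as claimed, with $p_n$ taken from Eq. (\ref{eq:6}).

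There is essentially no obstacle here; the theorem is a one-line corollary of the identity derived just above its statement, and the only nontrivial ingredient, namely the Haar average $\int dU\,\kett{U^n}\braa{U^n}$, has already been computed in Corollary \ref{cor:1}. If anything, the only thing worth being careful about is the normalization convention $\kett{V} = \frac{1}{\sqrt{d}}\sum_{j,k}V_{jk}\ket{j}\ket{k}$, which makes $\brakett{\mathds{1}}{\mathds{1}} = 1$ (and not $d$) and hence produces the correct scalar $p_n$ rather than, say, $p_n d^2$, in the final answer.
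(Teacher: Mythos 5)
Your proposal is correct and follows essentially the same route as the paper: the paper also obtains the result by setting $\mathcal{V}=\mathds{1}$ in Eq. (\ref{eq:-11}), which it had derived from Eq. (\ref{eq:-10}) together with Corollary \ref{cor:1}. Your additional check that $|\brakett{\mathds{1}}{\mathds{1}}|^2=1$ under the normalization $\kett{V}=\frac{1}{\sqrt{d}}\sum_{j,k}V_{jk}\ket{j}\ket{k}$ is exactly the (implicit) step the paper relies on, so there is nothing missing.
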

As it can be seen in Fig. \ref{fig:plot_data_2}, the performance of this process is better than the estimation method, and for $n\geq d$, $F_{idn,n}=\frac{1}{d}$. In fact, we will see in the next section that this channel achieves the optimum fidelity in certain cases. \\

The \emph{direct channel strategy} is another trivial method with the similar performance for high enough orders n. In this case, $\mathcal{U}^n$ is approximated by $\mathcal{U}$ and the given gate is performed \emph{directly} on the input state by replacing $\mathcal{A}_n$ and $\mathcal{B}_n$ with identity channels. Numerical results for this scheme is depicted in Fig. \ref{fig:plot_data_2}. Note that the same phase transition as in case of the estimation and random guess strategies occurs. \\

\begin{figure}
\centering 
\includegraphics[width=8.6 cm]{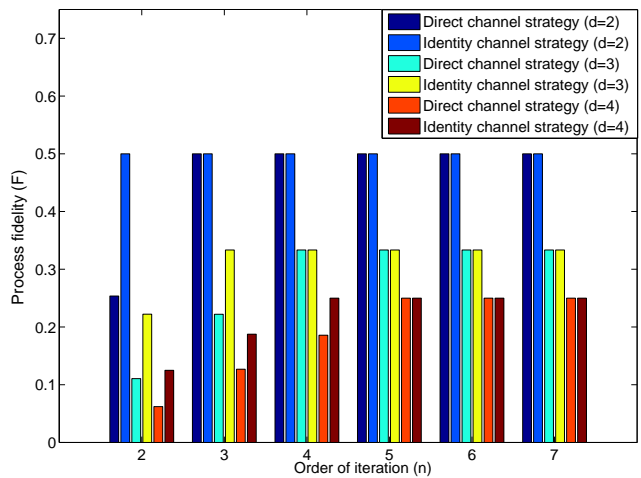}
\caption{Fidelity of the identity and direct channel strategies. Both schemes have equal performances for $n>d$, and, in fact, they achieve the optimum fidelity in this case.}
\label{fig:plot_data_2} 
\end{figure}
\section{Optimum fidelity}\label{sec:Optimal strategies}
The most general form of the quantum circuit of $\mathsf{Iter}_n$ may be described by concatenation of different unitary channels and some ancillary systems, namely, the Stinespring realization shown in Fig. \ref{fig:ckt}. Therefore, one way to find the optimal process that faithfully realizes $\mathsf{Iter}_n$ is to maximize the fidelity over all quantum channels $\mathcal{A}_n$ and $\mathcal{B}_n$. This is not the only way, and in fact, a more suitable way to describe $\mathsf{Iter}_n$ exists: the quantum comb notion \cite{Arch}.\\

In this method, instead of considering separate channels $\mathcal{A}_n$ and $\mathcal{B}_n$, they are merged and replaced with a channel $\mathcal{C}_n$ from $D(\mathcal{H}^{(0)})\pr{D(\mathcal{H}^{(2)})}$ to $D(\mathcal{H}^{(1)})\pr{D(\mathcal{H}^{(3)})}$, see Fig. \ref{fig:Picture4}. This channel has an open slot in which the given unitary gate $\mathcal{U}$ is inserted and the $n$'th iteration of $\mathcal{U}$ is realized, Fig \ref{fig:Picture5}. \\

\begin{figure}
\hspace*{.1 cm}
\subfloat[]{\label{fig:Picture4}}%
  \includegraphics[width=0.3\textwidth]{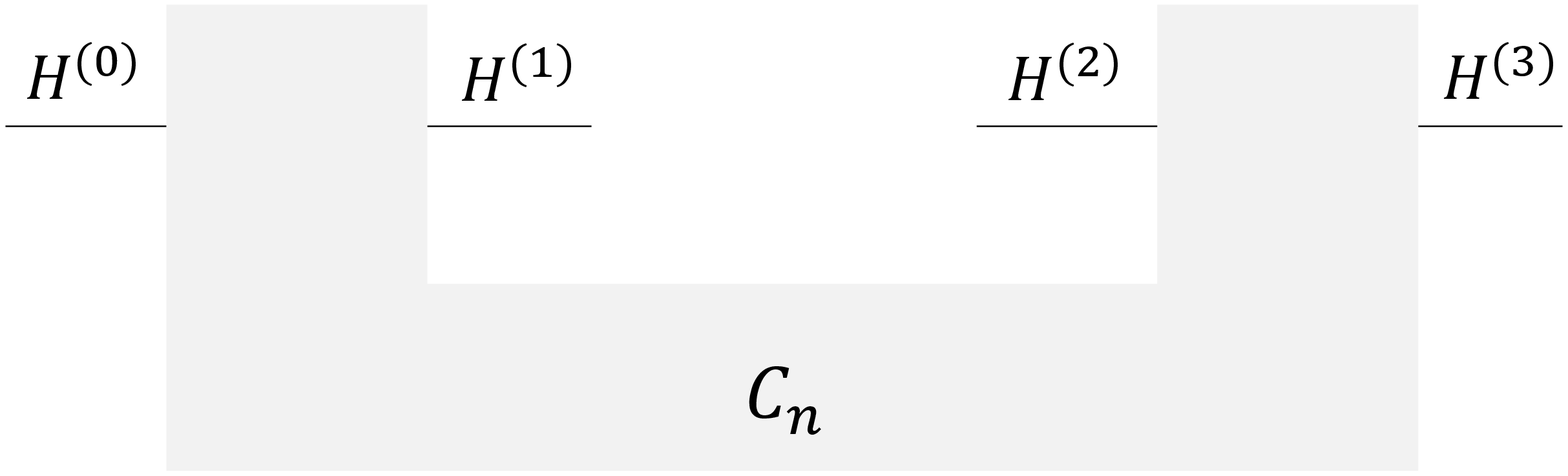}%
\hspace{.5 cm}
\subfloat[]{ \label{fig:Picture5}}%
  \includegraphics[width=0.35 \textwidth]{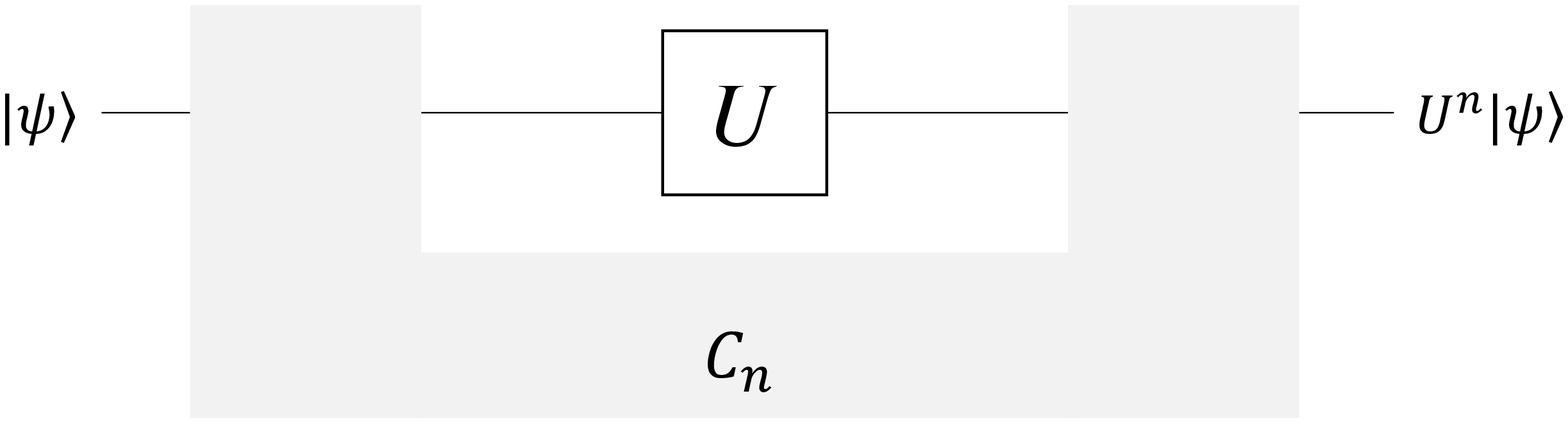}%

  \caption{(a) The channel $\mathcal{C}_n$ with one open slot is a replacement for the Stinespring realization shown in Fig. \ref{fig:ckt}, (b) the unitary gate is inserted into the slot to realize $\mathsf{Iter}_n$.}
\end{figure}

In the circuit shown in Fig. \ref{fig:Picture5}, the processing of information is from left to right as time passes, and the outputs may depend on the previous but not the later times inputs. Thus, not every quantum channel from $D(\mathcal{H}^{(0)})\pr{D(\mathcal{H}^{(2)})}$ to $D(\mathcal{H}^{(1)})\pr{D(\mathcal{H}^{(3)})}$ can be realized with such a circuit, and they need to meet additional \emph{causality constraints}. \\

The \textit{quantum comb} $R_{\mathcal{C}_n}$ is defined as the Choi operator associated with $\mathcal{C}_n$ acting on $D(H^{(1)})\pr{D(H^{(3)})}\pr{D(H^{(0)})}\pr{D(H^{(2)})}$. As it is proven in \cite{Arch}, the causality constraint is equivalent to the following set of linear constraints on the quantum comb $R_{\mathcal{C}_n}$
\begin{align}
\Tr_3(R_{\mathcal{C}_n})&=\frac{\mathds{1}_2}{d}\pr{R_{\mathcal{C}_n}^{(1)}} \nonumber\\
\Tr_1(R_{\mathcal{C}_n}^{(1)})&=\frac{\mathds{1}_0}{d}, \label{eq:-12}
\end{align}
where $\Tr_i$ means the partial trace over $H^{(i)}$ and $R_{\mathcal{C}}^{(1)}$ is a Choi operator on $\mathcal{H}^{(1)}\pr{\mathcal{H}^{(0)}}$, and subscripts of operators represent the related Hilbert space of them.\\

The benefit of using the quantum comb notion is clearly seen when the composition of $\mathcal{C}_n$ with the unitary gate $\mathcal{U}$, denoted by $\mathcal{C}_n \star \mathcal{U}$, is to be described, Fig. \ref{fig:Picture5}. It can be proven that (see Ref. \cite{Arch}) the Choi operator of the channel $\mathcal{C}_n \star \mathcal{U}$ is :
\begin{align}
R_{\mathcal{C}_n \star \mathcal{U}}=d^2\braa{U^*_{21}}R_{\mathcal{C}_n}\kett{U^*_{21}},
\end{align}
where $R_{\mathcal{C}_n \star \mathcal{U}}\in L(\mathcal{H}^{(3)}\otimes \mathcal{H}^{(0)})$ and $U^*$ is the conjugate complex of $U$. The subscripts $21$ in $\kett{U^*_{21}}$ denotes the domain and image Hilbert spaces of the operator $U$. Thus, the fidelity (Eq. (\ref{eq:-17})) is 
\begin{align}
F(\mathcal{C}_n) &=\int dU\ \Tr( d^2 \braa{U^*_{21}} R_{\mathcal{C}_n} \kett{U^*_{21}}\ \kett{U^n_{30}}\braa{U^n_{30}}\ )\nonumber\\
&=\Tr(d^2 R_{\mathcal{C}_n} \int dU\ \kett{U^n_{30}}\braa{U^n_{30}}\ \pr{\kett{U^*_{21}}\braa{U^*_{21}}}\ ).
\end{align}
Let $\tilde{M}_n:=d^2 \int dU\ \kett{U^n_{30}}\braa{U^n_{30}}\ \pr{\kett{U^*_{21}}\braa{U^*_{21}}}$, then 
\begin{align}
F(\mathcal{C}_n)=\Tr(R_{\mathcal{C}_n}\tilde{M}_n). \label{eq:-13}
\end{align}
Therefore, to find optimal strategies for realizing $\mathsf{Iter}_n$, the following optimization problem should be solved:
\begin{align}
\max_{R_{\mathcal{C}_n}}\ & \Tr(R_{\mathcal{C}_n}\tilde{M}_n)\nonumber \\
\sth \ &\Tr_3(R_{\mathcal{C}_n})=\frac{\mathds{1}_2}{d}\pr{R_{\mathcal{C}_n}^{(1)}}, \\
&\Tr_1(R_{{\mathcal{C}_n}}^{(1)})=\frac{\mathds{1}_0}{d},\nonumber \\
&R_{\mathcal{C}_n}\geq0,\ R_{\mathcal{C}_n}^{(1)}\geq 0\nonumber.
\end{align}
This is an example of Semidefinite Programming (SDP) \cite{Boyd}, which is numerically solvable using packages like CVX \cite{CVX}. The optimum fidelity obtained by this method and fidelity of the identity channel are shown for different cases in Fig. \ref{fig:data_plot_3}. For $d=2$, $n>2$, the identity and direct channels discussed in Sec. \ref{sec:idn strategies} achieve the optimum fidelity, and this is quite unanticipated, since both are trivial methods where resources like entanglement or general preprocessing or postprocessing units are not used. \\

As in the case of other approximating processes investigated earlier, the optimum fidelity reaches a constant value, and the optimal iterator has the same performance for high enough orders $n$. This phenomena is not observed in the case of $1$-to-$n$ cloning of unitary gates where the fidelity seems to decreases monotonically with growth of $n$ \cite{unitary-cloning}. In addition, in that problem, the performance of the optimal cloner depends crucially on the entanglement of input states with the ancillary system, and the identity channel has a by-far-worse fidelity than the optimal cloner.\\  

\begin{figure}
\centering 
\includegraphics[width=8.6 cm]{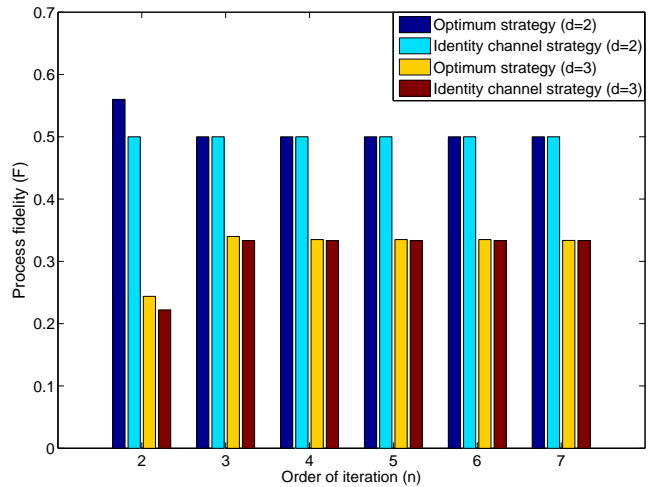}
\caption{The optimum fidelity in approximating $\mathsf{Iter}_n$.}
\label{fig:data_plot_3} 
\end{figure}
\section{Conclusion}\label{sec:Conclusion}

We have shown that it is impossible to iterate an unknown quantum gate by using it once, what we called it the no-iteration theorem. We have also investigated different schemes to approximately bypass this no-go result: (1) The random guess strategy in which iterations of a randomly chosen gate is performed. (2) The measure-and-prepare method where the given gate $\mathcal{U}$ is first estimated using the state $\kett{U}$, and then unitary processes are performed on the input state accordingly. (3) Approximating with the identity channel or by performing the given unitary process directly on the input system. In addition, by using the notion of quantum comb, we have been able to state the problem of finding the optimal iterator as a semidefinite programming, which we have solved numerically for $d=2, 3$.\\

The iteration problem has some unique features that make it different from similar problems like cloning of unitary channels. One is that the performance of all discussed methods including the optimal one, remains the same for highly enough orders $n$. In the case of random guess, we saw the connection of this phase transition to the joint distribution of eigenvalues of a random unitary matrix, which changes from being highly ordered to having no structure for $n\geq d$. The other feature is that the performance of trivial processes like identity or direct channels is comparable to the optimal strategies, and at least for $d=2,3$, numerical solutions show they achieve the optimum performance for $n >d$.\\

This no-go theorem is another example of transformations of quantum channels that cannot be realized perfectly. Providing these examples helps us to understand the characteristics of quantum operations as \emph{carriers of information}, and shows us how laws of quantum mechanics act when evolution of operations is considered instead of states.\\

Interesting behaviors of gate iterators discussed in this paper, motivates a more general study of powers of unitary operators in $n \gg 1$ limit. The iteration problem and the performance of the optimal iterator might also be explored when multiple copies of the oracle is provided.
\appendix	
\section{Alternate Proof of Theorem \ref{thm: no-rep}}\label{app:-1}

\begin{proof}[\textbf{Alternate proof}]
According to the following lemma, proved as a theorem in \cite{bounds_search}, there exists a lower bound on the performance of quantum search algorithms. This lower bound is only a few percent smaller than the number of iterations required by Grover’s algorithm \cite{Grover}.  
\begin{lemma}
Let $T$ be any set of $N$ strings, and $M$ be any oracle quantum machine with bounded error probability. Let $y \in_R T$ be a randomly and uniformly chosen element from $T$. Put $\mathcal{O}$ to be the oracle where $\mathcal{O}(x) = 1$ if and only if $x = y$. Then the expected number of times $M$ must query $\mathcal{O}$ in order to determine $y$ with probability at least $\frac{1}{2}$ is at least $\lfloor\sin(\frac{\pi}{8})\sqrt{N}\rfloor$.
\end{lemma}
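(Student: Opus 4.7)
The plan is to prove this query lower bound via a \emph{hybrid argument} comparing the machine $M$ when run with the informative oracle $\mathcal{O}_y$ against the \emph{empty} oracle $\mathcal{O}_\emptyset$ that always returns $0$. Let $\ket{\psi^y_t}$ and $\ket{\psi^\emptyset_t}$ denote the state of $M$ (together with its query register) after $t$ queries to $\mathcal{O}_y$ and $\mathcal{O}_\emptyset$ respectively, and for each step $t$ let $q_t(y):=\Norm{\Pi_y\ket{\psi^\emptyset_{t-1}}}^2$, the probability mass that $M$ places on querying the string $y$ at step $t$ when running with the empty oracle.

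The first key step is the standard \emph{hybrid inequality}: because $\mathcal{O}_y$ and $\mathcal{O}_\emptyset$ agree on every input except $y$, the perturbation to the state caused by the $t$-th query has Euclidean norm at most $2\sqrt{q_t(y)}$; since the remaining unitary operations preserve norms, the triangle inequality yields
\begin{align}
\Norm{\ket{\psi^y_T}-\ket{\psi^\emptyset_T}} \;\le\; 2\sum_{t=1}^{T}\sqrt{q_t(y)}.
\end{align}
Averaging over a uniformly random $y\in T$ and using that $\sum_{y\in T}q_t(y)\le 1$ for every $t$ (since the $\Pi_y$ are orthogonal projectors), Cauchy--Schwarz gives $\sum_{y}\sqrt{q_t(y)}\le\sqrt{N}$ and hence
\begin{align}
\frac{1}{N}\sum_{y\in T}\Norm{\ket{\psi^y_T}-\ket{\psi^\emptyset_T}} \;\le\; \frac{2T}{\sqrt{N}}.
\end{align}
In particular, for the average $y$ the hybrid distance is at most $2T/\sqrt{N}$.

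Next I would convert the bounded-error success hypothesis into a lower bound on the same pure-state distance. If $M$ outputs the correct $y$ with probability at least $1/2$ when run on $\mathcal{O}_y$, then for a $(1-o(1))$ fraction of $y\in T$ the output distributions under $\mathcal{O}_y$ and $\mathcal{O}_\emptyset$ must differ in total variation by a fixed positive constant. A careful geometric analysis of optimal discrimination between the two pure states $\ket{\psi^y_T}$ and $\ket{\psi^\emptyset_T}$ then shows that this forces $\Norm{\ket{\psi^y_T}-\ket{\psi^\emptyset_T}}\ge 2\sin(\pi/8)$ on average. Combining with the hybrid bound gives $2T/\sqrt{N}\ge 2\sin(\pi/8)$, i.e.\ $T\ge \sin(\pi/8)\sqrt{N}$, and taking the integer floor produces the stated $\lfloor\sin(\pi/8)\sqrt{N}\rfloor$.

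The main obstacle will be extracting the exact constant $\sin(\pi/8)$: the hybrid inequality and the Cauchy--Schwarz step are routine, but tightly converting ``success probability $\ge 1/2$'' into a sharp Euclidean-distance lower bound on pairs of pure states requires the optimal two-state discrimination geometry, and it is precisely this geometry that produces the angle $\pi/8$. One must also handle the expectation carefully so that the bound holds for the worst-case (not merely average) $y$ in the stated form, which amounts to a Markov-style argument on the distribution of pure-state distances over $y$.
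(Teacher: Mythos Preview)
The paper does not prove this lemma at all: it is quoted verbatim as a theorem from Boyer--Brassard--H{\o}yer--Tapp \cite{bounds_search} and used as a black box in the alternate proof of Theorem~\ref{thm: no-rep}. There is therefore no ``paper's proof'' to compare your proposal against.

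As for the proposal itself, the hybrid (BBBV-style) argument you sketch is the correct genre and would certainly deliver an $\Omega(\sqrt{N})$ bound; the first two displayed inequalities are standard and fine. You are also right that the entire difficulty is the exact constant $\sin(\pi/8)$. Here your sketch is too optimistic: converting ``success probability $\ge 1/2$'' on $\mathcal{O}_y$ versus a fixed output distribution on $\mathcal{O}_\emptyset$ into the \emph{average} Euclidean bound $\Norm{\ket{\psi^y_T}-\ket{\psi^\emptyset_T}}\ge 2\sin(\pi/8)$ does not follow from two-state discrimination alone, because the relevant comparison is not symmetric binary discrimination but recovering one of $N$ values, and the empty-oracle run also outputs \emph{some} $y$ correctly with probability $1/N$. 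The original BBHT derivation does not go through Euclidean distances between global states; it tracks the \emph{angle} between the component of the state on the marked element and its complement, shows each oracle call increases this angle by at most $\arcsin(1/\sqrt{N})\sim 1/\sqrt{N}$, and observes that success probability $1/2$ requires the angle to reach $\pi/4$. That rotation picture is what makes $\sin(\pi/8)$ appear cleanly, and it is genuinely different from the distance-then-Cauchy--Schwarz route you propose. If you want the stated constant, you should switch to the angle argument; your hybrid approach as written will give a weaker constant.
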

Now imagine that $\mathsf{Iter_n}$ can be constructed perfectly, then for an appropriate number of strings $N$, the required number of queries can be reduced. This can be done easily by replacing each $n$ successive queries in Grover search algorithm with a single use of $\mathsf{Iter_n}$. Thus, the lower bound of the last Lemma is violated and this is a contradiction.
\end{proof}  



\end{document}